\newtheorem{theorem}{Theorem}
\begin{document}

\title{A Novel and Efficient Vector Quantization Based CPRI Compression Algorithm}

\author{\IEEEauthorblockN{Hongbo~Si, Boon~Loong~Ng, Md.~Saifur~Rahman, and Jianzhong~(Charlie)~Zhang}
\thanks{
The material in this paper was submitted in part to the 2015 IEEE Global Communications Conference (Globecom 2015), San Diego, CA, USA, Dec. 2015.}
\thanks{H.~Si, B.~L.~Ng, M.~S.~Rahman, and J.~(C.)~Zhang are with Samsung Research America - Dallas, 1301 E Lookout Dr, Richardson, TX 75082.
Email: \{hongbo.si, b.ng, md.rahman, jianzhong.z\}@samsung.com.}
}



\maketitle


\begin{abstract}

The future wireless network, such as Centralized Radio Access Network (C-RAN), will need to deliver data rate about $100$ to $1000$ times the current 4G technology. For C-RAN based network architecture, there is a pressing need for tremendous enhancement of the effective data rate of the Common Public Radio Interface (CPRI). Compression of CPRI data is one of the potential enhancements. In this paper, we introduce a vector quantization based compression algorithm for CPRI links, utilizing Lloyd algorithm. Methods to vectorize the I/Q samples and enhanced initialization of Lloyd algorithm for codebook training are investigated for improved performance. Multi-stage vector quantization and unequally protected multi-group quantization are considered to reduce codebook search complexity and codebook size. Simulation results show that our solution can achieve compression of $4$ times for uplink and $4.5$ times for downlink, within $2\%$ Error Vector Magnitude (EVM) distortion. Remarkably, vector quantization codebook proves to be quite robust against data modulation mismatch, fading, signal-to-noise ratio (SNR) and Doppler spread.

\end{abstract}



\section{Introduction}

The amount of wireless IP data traffic is projected to grow by well over $100$ times within a decade (from under $3$ exabytes in 2010 to more than $500$ exabytes by 2020) \cite{cisco2014vni}. To address such wireless data traffic demand, there has been increasing effort to define the 5G network in recent years. It is widely recognized that the 5G network will be required to deliver data rate about $100$ to $1000$ times the current 4G technology, utilizing radical increase in wireless bandwidths at very high frequencies, extreme network densitification, and massive number of antennas \cite{andrews2014will}.

Distributed base station architecture and Cloud Radio Access Network (C-RAN) will continue to be an important network architecture well into the future \cite{cmcc2011cran}. Therefore, there is a pressing need to drastically enhance the data rate of the Common Public Radio Interface (CPRI) (see Fig.~\ref{fig:cpri}), which is the industry standard for interface between the Baseband Units (BBU) and the Remote Radio Units (RRU).

One way to address the significant increase in the CPRI data rate is to deploy more links (typically fibers) connecting the BBUs and the RRUs, but such deployment would incur extraordinary high cost. An alternative method, which can be much more cost effective, is to employ data compression over CPRI links. It is impossible to utilize only CPRI link compression to meet the CPRI link data rate requirement. Nevertheless, CPRI link compression can greatly reduce the required cost when employed in conjunction with new links deployment. Rate reduction between the BBU and the RRU can also be achieved by moving some of the functions traditionally performed at the BBU to the RRU, but this requires a significant change to the existing distributed base station or C-RAN architecture \cite{radisys2014cran}. This paper focuses on CPRI link data compression which can be employed with minimal change to the current distributed base station or C-RAN architecture.

\begin{figure}[t!]
\includegraphics[scale=0.6]{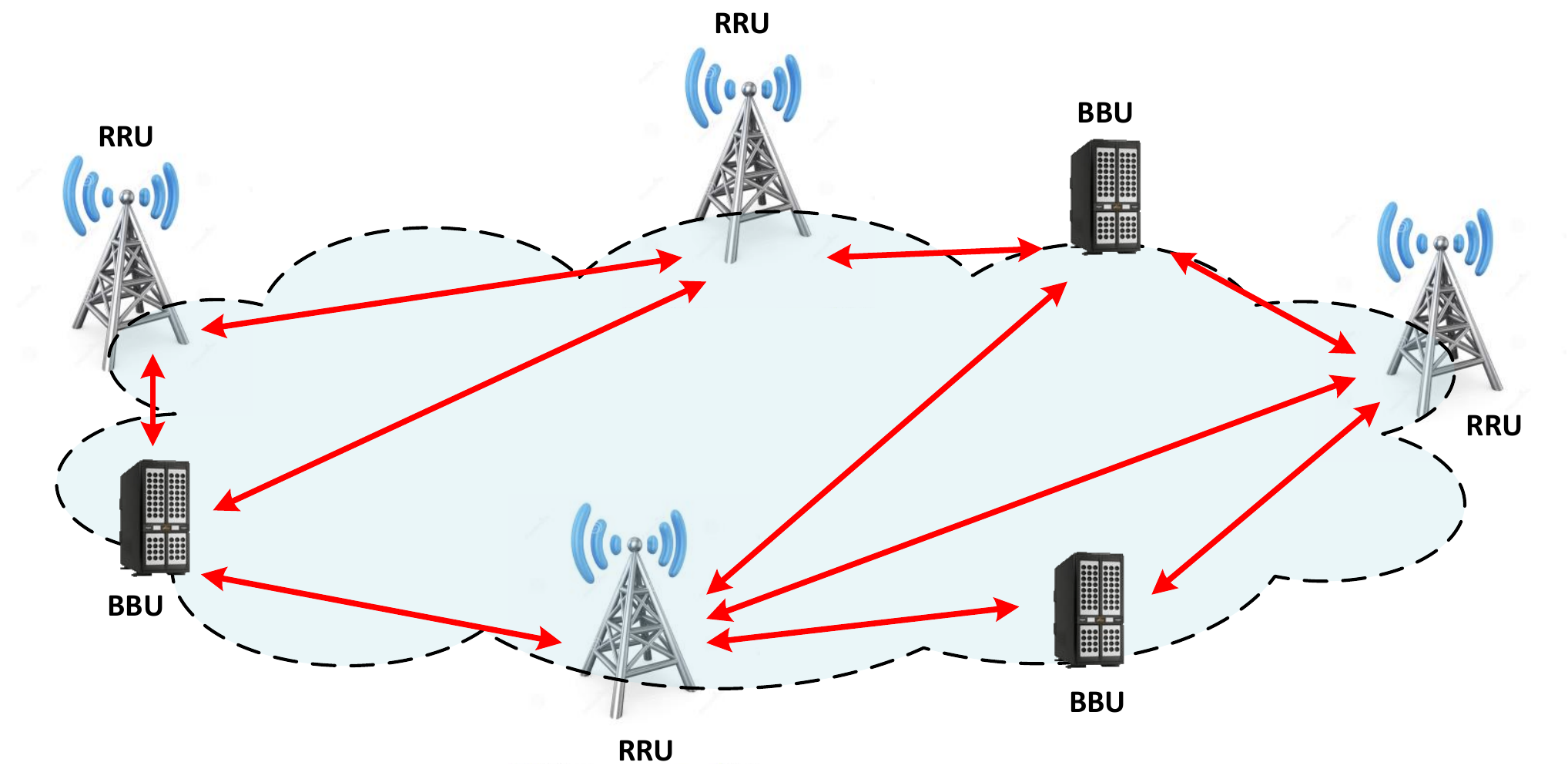}
\centering
\caption{Illustration of CPRI links (represented by arrows) within a C-RAN system.}
\label{fig:cpri}
\end{figure}

CPRI link compression techniques commonly found in the literature are based on scalar quantization \cite{samardzija2012compressed, ren2014compression, nieman2013timedomain, park2013robust}, often with block scaling before quantization to adjust for the dynamic range of samples to be quantized. For LTE networks, time-domain LTE OFDM I/Q samples are carried over CPRI links. Cyclic prefix removal for downlink and decimation to remove the inherent redundancy in oversampled LTE signals have been proposed to achieve additional compression. \cite{samardzija2012compressed} proposed a non-linear scalar quantization, whereby the quantizer is trained off-line using an iterative gradient algorithm. Compression gain of $3$ times was reported with approximately $2\%$ EVM for $10$ MHz downlink LTE data. \cite{ren2014compression} reported a compression gain of $3.3$ times with approximately $2\%$ EVM distortion using decimation, an enhanced block scaling and a uniform quantizer. Lloyd-Max scalar quantization with noise shaping was considered in \cite{nieman2013timedomain}, and distributed compression was investigated in \cite{park2013robust}.

Our approach differs from previous approaches in that we consider a vector quantization (VQ) based compression, rather than a scalar one. We exploit the fact that due to the IFFT (FFT) operation for downlink (uplink), the I/Q samples of an OFDM symbol are correlated over time. Scalar quantizer is not capable of exploiting such time correlations. On the other hand, vector quantization, by mapping grouped samples into codewords, can explore such correlations and achieves better compression gain \cite{gersho1992vector}. From complex I/Q samples, vectors need to be formed before quantization. There are several vectorization methods depending on how I/Q samples are placed within the vectors that are formed. We investigate the performance of different vectorization methods for constructing vectors from I/Q samples. For vector quantization codebook training, Lloyd algorithm is introduced. To further enhance performance, we propose a modified algorithm with different initialization step, where multiple trials work in serial to generate better codebook. Low-complexity vector quantization algorithms in the form of multi-stage vector quantization (MSVQ) and unequally protected multi-group quantization (UPMGQ) are also considered in the paper. Analysis and simulation result show that the proposed compression scheme can achieve $4$ times compression for uplink and $4.5$ times compression for downlink within $2\%$ EVM distortion.

The rest of this paper is organized as follows. Section~\ref{sec:algorithm} describes the CPRI compression algorithm in details, including the report on the serial initialization pattern to improve the performance of trained codebook from regular vector quantization as well as the discussion on universal compression. Section~\ref{sec:advanced} discusses advanced quantization methods, in order to reduce the searching and storing complexities of regular vector quantization. After that, Section~\ref{sec:simulation} contains all simulation results for both downlink and uplink, comparing different quantization methods mentioned in previous sections. Finally, Section~\ref{sec:conclusion} concludes the paper.


\section{CPRI Compression Algorithm}
\label{sec:algorithm}

A system framework for our vector quantization based CPRI compression and decompression for both downlink and uplink is illustrated in Fig.~\ref{fig:framework}. For downlink, the input to the CPRI compression module located at the BBU site is a stream of digital I/Q samples from the BBU. The CPRI compression module further contains modules of \emph{Cyclic Prefix Removal}, \emph{Decimation}, \emph{Block Scaling}, \emph{Vector Quantizer} and \emph{Entropy Encoding}. At the RRU site, the CPRI decompression module performs the reverse operations. For uplink, the analog-to-digital converter (ADC) output is the input to the CPRI compression module located at the RRU site and the CPRI decompression module at the BBU site performs the reverse operations.

\begin{figure*}[t!]
\includegraphics[scale=0.6]{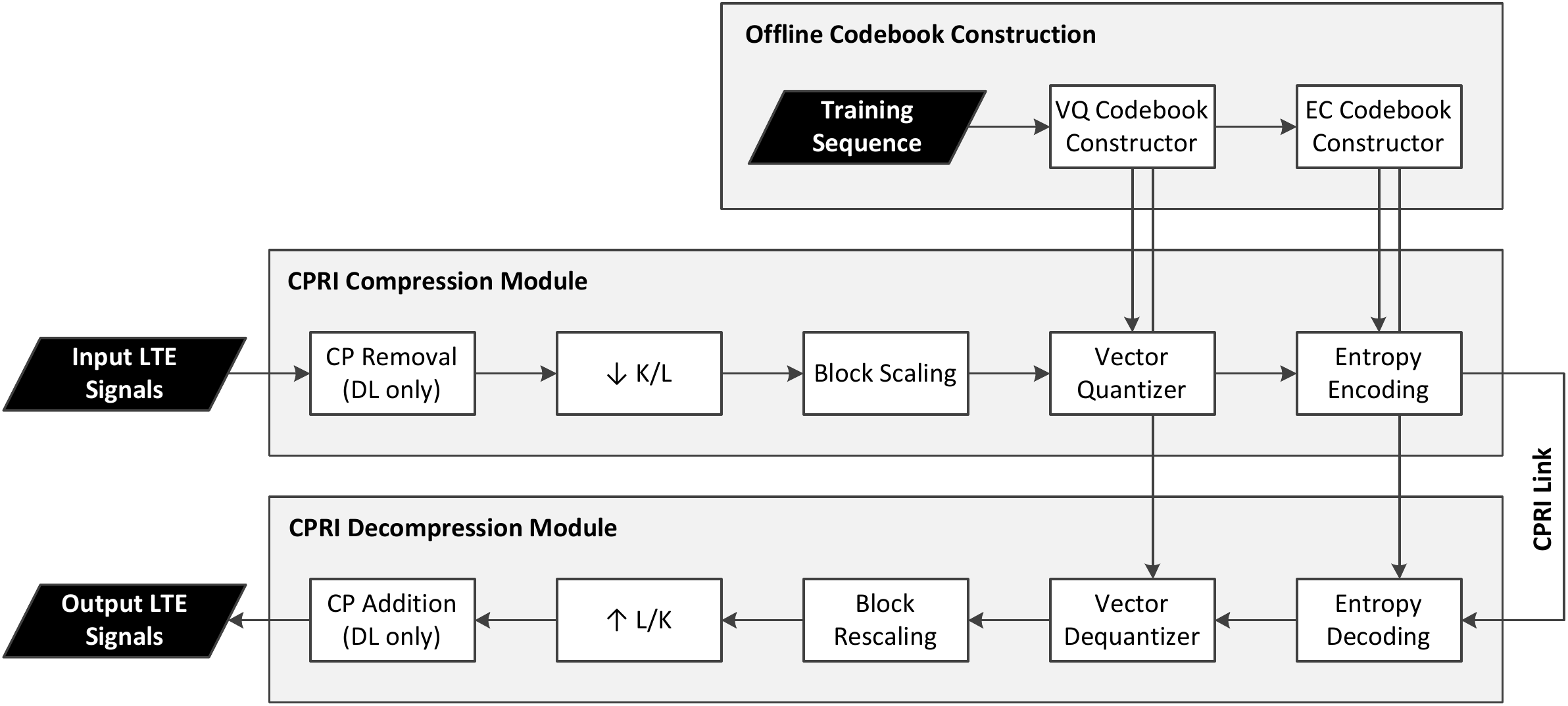}
\centering
\caption{Vector quantization based CPRI compression algorithm framework.}
\label{fig:framework}
\end{figure*}

Within these function blocks, \emph{Cyclic Prefix Removal}, \emph{Decimation}, and \emph{Block Scaling} are standard signal processing \cite{oppenheim1989discrete}. A sketch of their roles and compression gains are summarized as follows (please refer to Appendix~\ref{app:details} for details of theses blocks):
\begin{itemize}
\item \emph{CP Removal} block, applicable for downlink only, aims to eliminate the time domain redundancy from cyclic prefix. The compression gain from this block (i.e., $\textrm{CR}_{\textrm{CPR}}$) can be expressed as
\begin{align}
\textrm{CR}_{\textrm{CPR}}=\frac{L_{\textrm{SYM}}+L_{\textrm{CP}}}{L_{\textrm{SYM}}},\label{fun:CR_CPR}
\end{align}
where $L_{\textrm{SYM}}$ and $L_{\textrm{CP}}$ denote IFFT output symbol length and cyclic prefix length, respectively.
\item \emph{Decimation} block aims to reduce the redundancy in frequency domain because LTE signal is oversampled. The compression gain from this block (i.e., $\textrm{CR}_{\textrm{DEC}}$) can be expressed as
\begin{align}
\textrm{CR}_{\textrm{DEC}}=\frac{L}{K},\label{fun:CR_DEC}
\end{align}
where $L$ and $K$ denote downsampling and upsampling factors, respectively.
\item \emph{Block Scaling} block aims to lower the resolution of the signal and maintain the dynamic range to be consistent with the downstream quantization codebook. There is no direct compression gain from this block. In contrast, extra signaling overhead of $Q_{\textrm{BS}}$ bits for every $N_{\textrm{BS}}$ samples is required, where $Q_{\textrm{BS}}$ is the target resolution and $N_{\textrm{BS}}$ is the number of samples forming a block.
\end{itemize}

\emph{Vector Quantization} and \emph{Entropy Coding} blocks are key techniques in our CPRI compression algorithm. In Section~\ref{sec:algorithm:vq} and Section~\ref{sec:algorithm:ec}, we present their mechanisms and performances in detail.

\subsection{Vector Quantization}
\label{sec:algorithm:vq}

As shown in Fig.~\ref{fig:quantization}, vector quantization/dequantization is performed based on a vector quantizer codebook \cite{gray1984vector}. The inputs to vector quantization module are the vectorized samples $\bm{s}_{\textrm{VEC}}(\bar{m})$ ($\bar{m}\in\{1,\ldots,2M/L_{\textrm{VQ}}\}$, where $M$ is the number of I/Q samples and $L_{\textrm{VQ}}$ is the vector length), and the vector quantizer codebook which is a set of vector codewords $\bm{c}(k)$ ($k\in\{1,\ldots,2^{L_{\textrm{VQ}}\cdot Q_{\textrm{VQ}}}\}$, where $2^{L_{\textrm{VQ}}\cdot Q_{\textrm{VQ}}}$ is the codebook size). The codebook is trained off-line using training samples such that a specified distortion metric (such as the Euclidean distance) is minimized. The vector quantizer  maps a vector sample $\bm{s}_{\textrm{VEC}}(\bar{m})$ to one of the vector codewords which would minimize the specified distortion metric. Each quantized sample vector is represented by $L_{\textrm{VQ}}\cdot Q_{\textrm{VQ}}$ bits. Then, the compression gain from vector quantization is given by
\begin{align}
\textrm{CR}_{\textrm{VQ}}=\frac{Q_{0}}{Q_{\textrm{VQ}}},\label{fun:CR_VQ}
\end{align}
where $Q_{0}$ is the uncompressed bitwidth of I or Q component of each sample, and $Q_{\textrm{VQ}}$ is the effective bitwidth of quantized samples. $Q_{0}$ is typically $15$, specified in \cite{CPRI2011}. Next, we discuss in detail the vectorization step and the codebook training.

\begin{figure}[t!]
\includegraphics[scale=0.6]{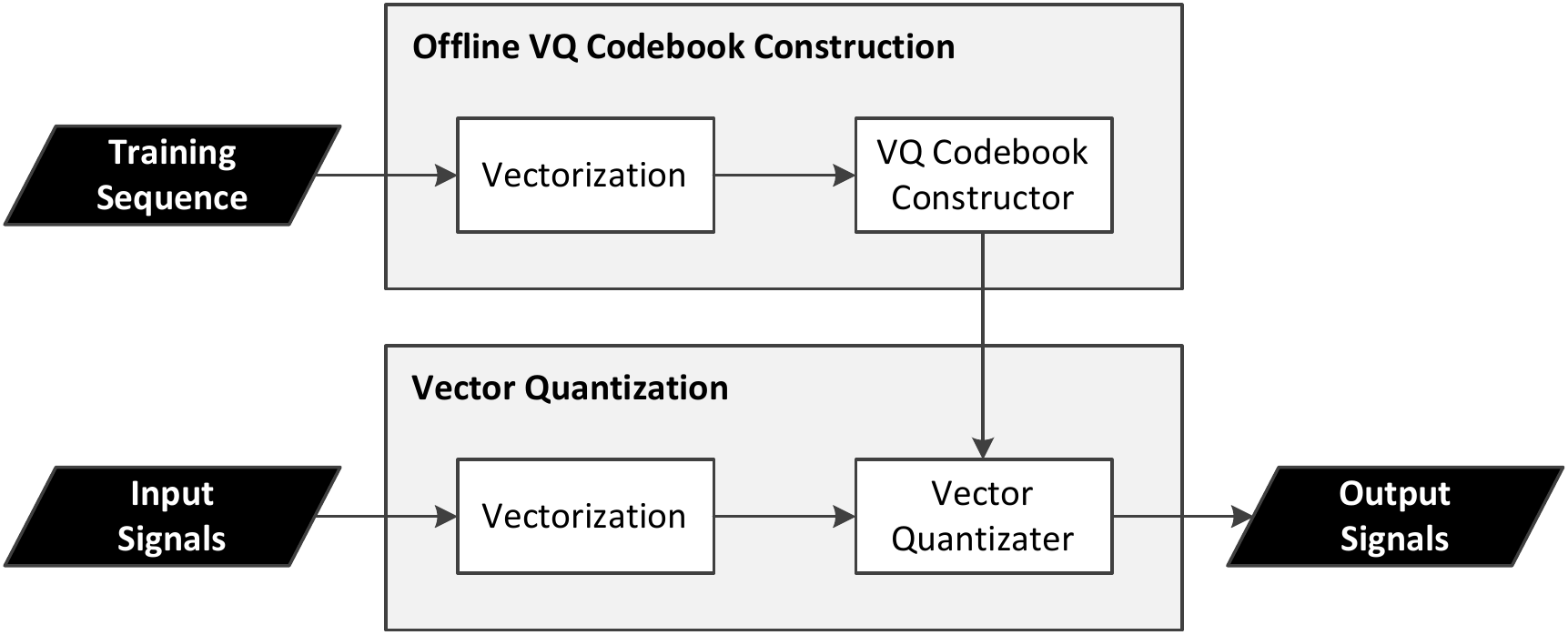}
\centering
\caption{Process chain in the \emph{Vector Quantization} block for CPRI compression.}
\label{fig:quantization}
\end{figure}

\subsubsection{Vectorization}
Before vector quantization, the vectorization of I/Q samples is needed. The purpose of vectorization is to construct vectors from I/Q samples that can capture correlation or dependency across I/Q samples. A question of interest is: given a vector length $L_{\textrm{VQ}}$, what is the best way to perform vectorization of I/Q samples such that the compression gain of vector quantization can be maximized? To this end, we consider the following three vectorization methods (see Fig.~\ref{fig:vectorization} for an illustration):
\begin{itemize}
  \item Method 1: Consecutive I components in time are grouped as vectors. Similarly, consecutive Q components are grouped as vectors.
  \item Method 2: I and Q components of the same time index are grouped as vectors.
  \item Method 3: I and Q components of all samples are randomly grouped as vectors.
\end{itemize}

\begin{figure}[t!]
\hspace*{3cm}
\includegraphics[scale=0.6]{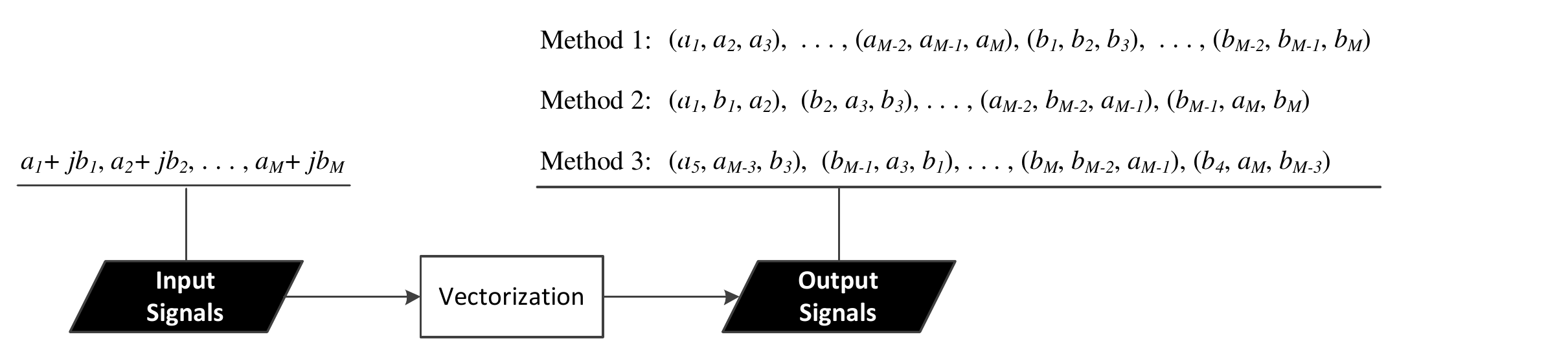}
\centering
\caption{Illustration of three vectorization methods.}
\label{fig:vectorization}
\end{figure}

The three vectorization methods are compared using the entropy of the distribution of the constructed vectors of chosen length in Euclidean orthants as metric. Since lower entropy value indicates higher correlation between the components of the constructed vectors, the vectorization method with smaller entropy implies higher quantization compression gain. Fig.~\ref{fig:vecmethod} shows plots of entropy versus $L_{\textrm{VQ}}$ for the three vectorization methods. It is evident from the plots that for LTE signals, Method 1 has the smallest entropy among the three methods, which shall be assumed in the rest of this paper.

\begin{figure}[t!]
\includegraphics[scale=0.6]{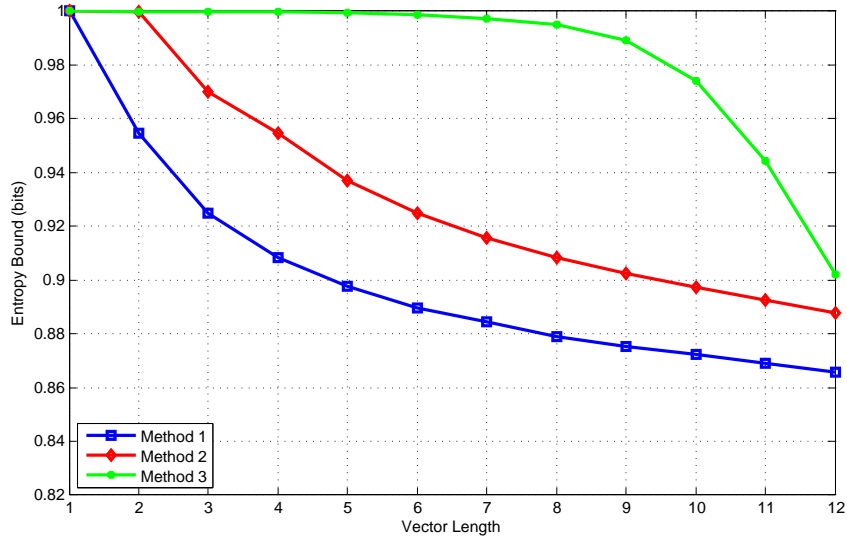}
\centering
\caption{Entropy bound verses vector length.}
\label{fig:vecmethod}
\end{figure}

\subsubsection{Codebook Training}

Our vector quantizer codebook is trained using Lloyd algorithm \cite{lloyd1982least} (a special case of LBG algorithm \cite{linde1980algorithm}). Lloyd algorithm is an iterative algorithm that can be utilized to construct codebook for vector quantization. It aims to find evenly-spaced sets of points (as codewords) in subsets of Euclidean spaces, and to partition input samples into well-shaped and uniformly sized convex cells. In general, Lloyd algorithm may start by randomly picking a number of input samples as initial codewords, and it then repeatedly executes samples partitioning and codebook updating in every iteration to reduce target distortion. A commonly used distortion metric is the Euclidean distance. Each time, the codeword points are left in a slightly more even distribution: closely spaced points move farther apart, and widely spaced points move closer together. Finally, Lloyd algorithm terminates at certain local optimal, after a proper stopping criterion is satisfied.

Classical Lloyd algorithm is known to be quite sensitive to the initial choice of codewords, especially when input sample space does not have a smooth structure \cite{lloyd1982least}\cite{bishop2006pattern}. In other words, if the input samples are concentrated in a particular space, Lloyd algorithm easily converges to a local optimum which can be away from the global optimum. To this end, classical Lloyd algorithm performs multiple independent trials and chooses the codebook that produces the lowest distortion metric from all trials to evade the initialization problem.
A process chain for classical Lloyd algorithm with multiple trials is illustrated in Fig.~\ref{fig:classical_lloyd}.

\begin{figure}[t!]
\includegraphics[scale=0.6]{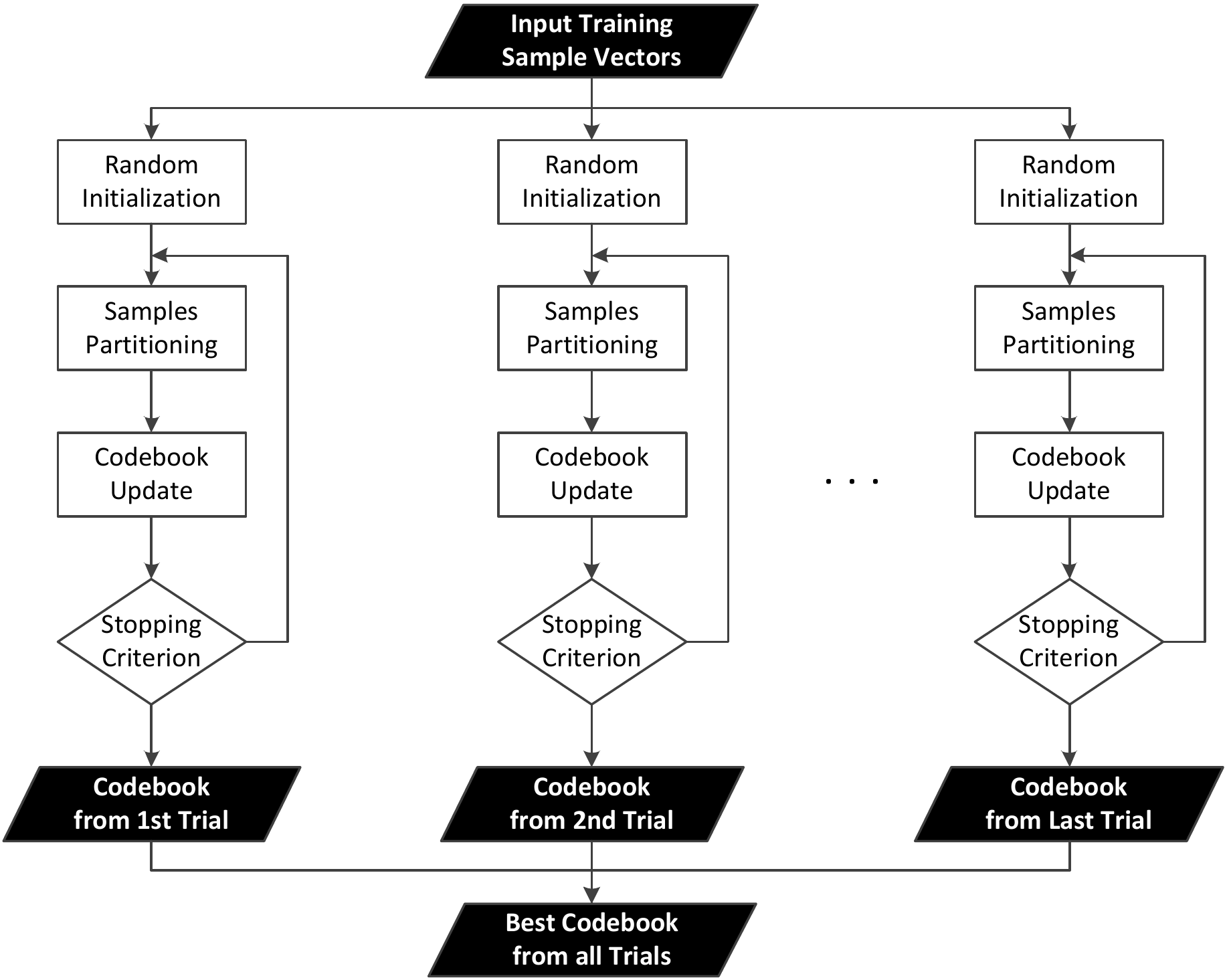}
\centering
\caption{Process chain for classical Lloyd algorithm with multiple trials working in parallel to perform vector quantization.}
\label{fig:classical_lloyd}
\end{figure}

In the case of LTE I/Q samples, the sample values are observed to be highly concentrated in a narrow range. However, we also observe that independent trials is not effective to overcome the initialization problem described earlier, specifically we observe persistent convergence to similar local optimum despite multiple Lloyd trials. To resolve this issue, we introduce a modified Lloyd algorithm, whereby the output (codebook) from previous trial is utilized as the input (the initial codebook) to the next trial, after applying proper rescaling to the initial codebook magnitude. The procedure of rescaling is essential, because the output from previous trial is already a local optimum. Rescaling helps to evade this local optimum and restart the search for a better codebook. The rescaling factor can be the square root of the average power of the I/Q samples. The diagram illustrating the modified Lloyd algorithm is given in Fig.~\ref{fig:modified_lloyd}.

\begin{figure}[t!]
\includegraphics[scale=0.6]{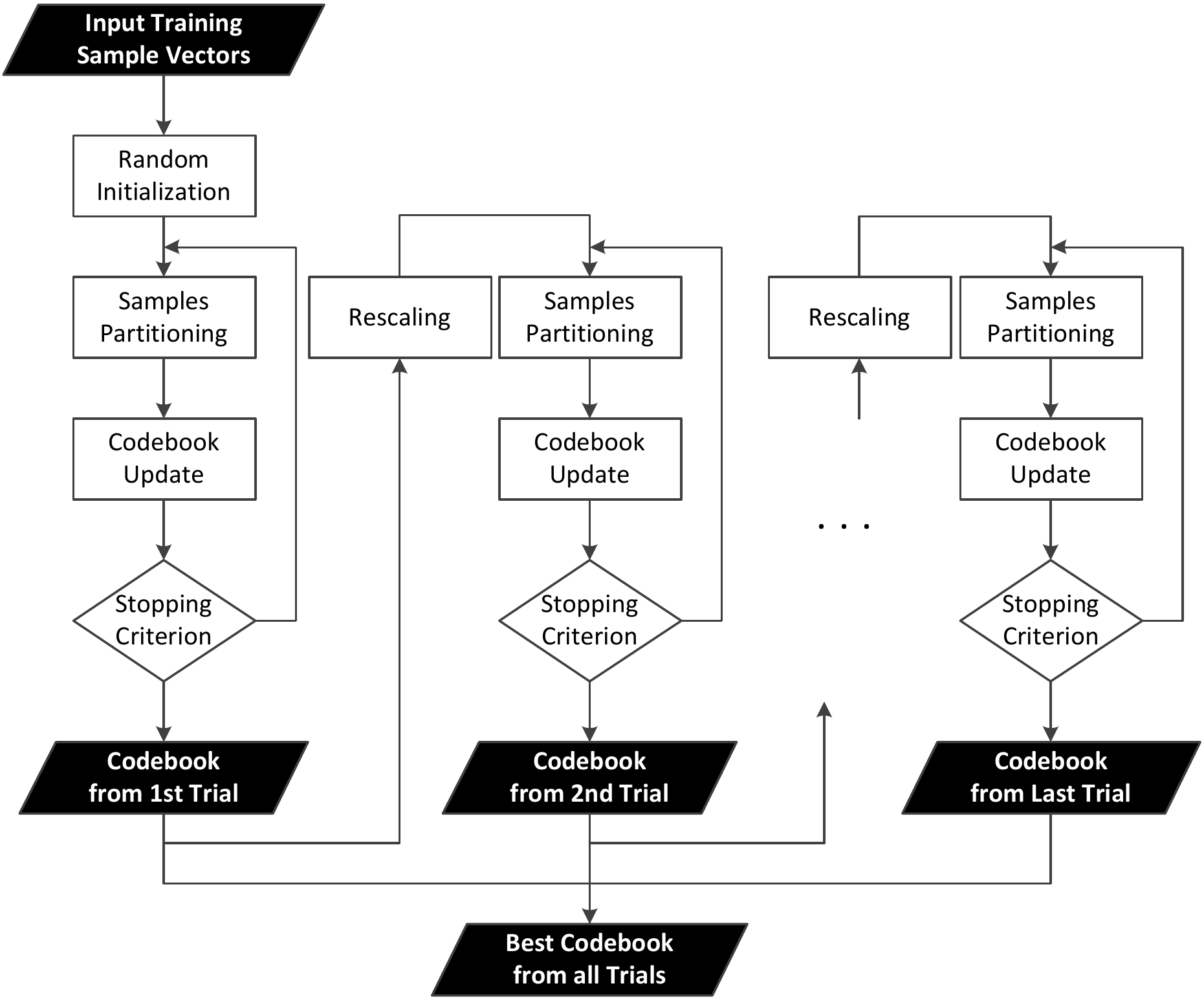}
\centering
\caption{Process chain for modified Lloyd algorithm, where multiple trials work in serial.}
\label{fig:modified_lloyd}
\end{figure}

The effectiveness of the modified Lloyd algorithm is illustrated by comparing the codebooks trained using the classical Lloyd algorithm and the modified Lloyd algorithm as shown in Fig.~\ref{fig:codebook_compare_classical} and Fig.~\ref{fig:codebook_compare_modified} respectively. In this simulation, LTE uplink signals with $5$dB SNR, $64$QAM modulation, and AWGN channel are considered as training sequences using VQ with $Q_{\textrm{VQ}}=6$ and $L_{\textrm{VQ}}=2$. Two axes of Fig.~\ref{fig:codebook_compare} represent the two elements of the grouped vectors. Each point on the plot represents a codeword and the color corresponds to frequencies (warmer color for higher frequency). It is observed that the codebook from modified algorithm better reflects the distribution of training sample vectors. This is also confirmed from the EVM improvement of $3.10\%$ from the classical algorithm to $2.54\%$ from the modified Lloyd algorithm for this particular case ($18.1\%$ improvement).

\begin{figure*}[t!]
    \centering
    \begin{subfigure}[b]{0.30\textwidth}
        \centering
        \includegraphics[width=0.985\textwidth]{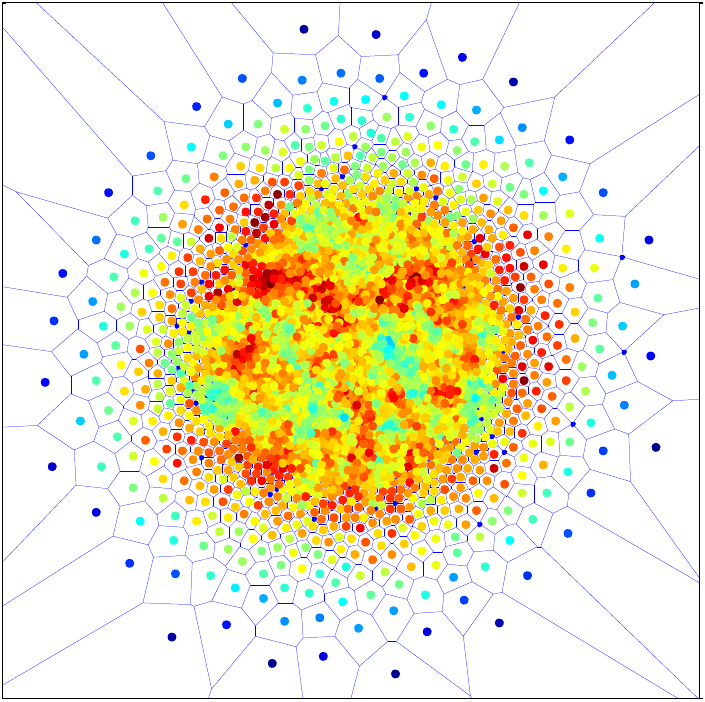}
        \caption{VQ codebook with Classical Lloyd algorithm.}
        \label{fig:codebook_compare_classical}
    \end{subfigure}
    \hfill
    \begin{subfigure}[b]{0.30\textwidth}
        \centering
        \includegraphics[width=1\textwidth]{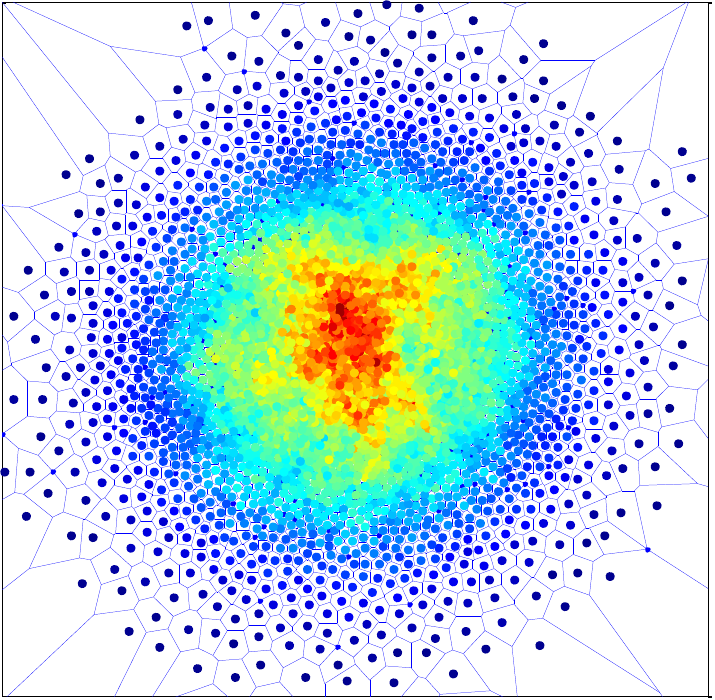}
        \caption{VQ codebook with Modified Lloyd algorithm.}
        \label{fig:codebook_compare_modified}
    \end{subfigure}
    \hfill
    \begin{subfigure}[b]{0.30\textwidth}
        \centering
        \includegraphics[width=0.985\textwidth]{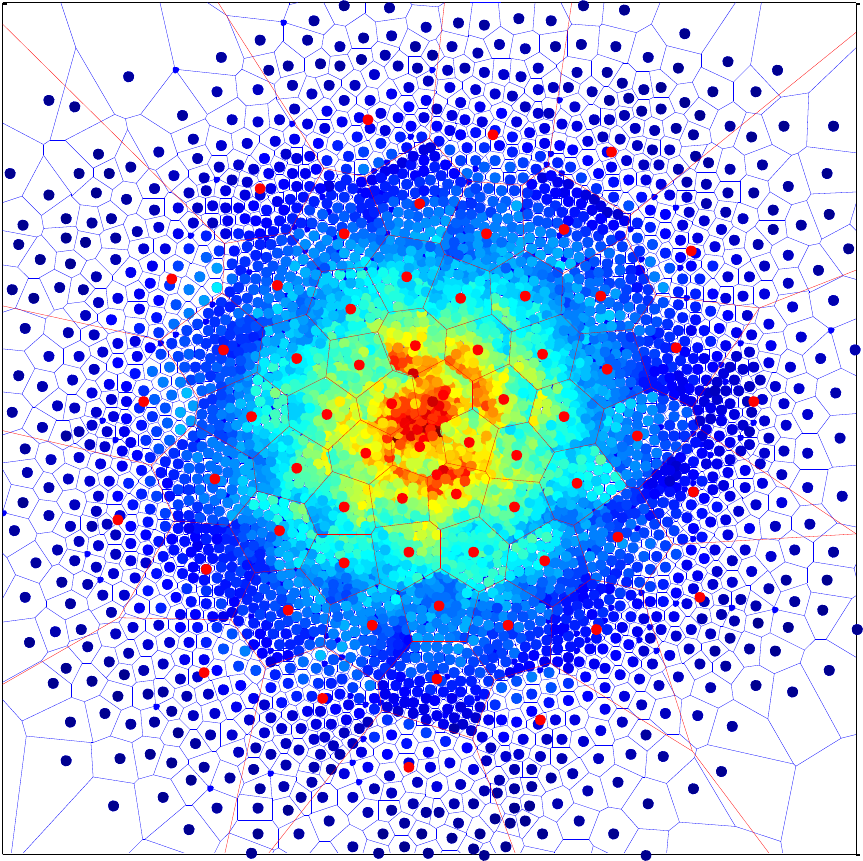}
        \caption{MSVQ codebook with Modified Lloyd algorithm.}
        \label{fig:codebook_compare_MSVQ}
    \end{subfigure}
    \caption{Comparison of codebooks from classical Lloyd algorithm, modified Lloyd algorithm, and MSVQ respectively.}
    \label{fig:codebook_compare}
\end{figure*}

\subsection{Entropy Coding}
\label{sec:algorithm:ec}

Entropy coding is a lossless data compression scheme that utilizes more bits to represent sources with lower frequency (higher entropy) and fewer bits to represent sources with higher frequency (lower entropy). The most common entropy coding technique is Huffman coding \cite{huffman1952method}. In Huffman coding, the dictionary construction procedure can be completed in linear time, and the final codeword for each symbol can be simply constructed from a splitting tree. According to Shannon's lossless source coding theorem \cite{shannon1948mathematical}, the optimal average coded length for a source is its entropy, and Huffman codes are proven to be optimal with linear time complexity for symbol-by-symbol coding \cite{cover2006elements}.

Without entropy coding, each codeword of vector quantization codebook requires exactly $L_{\textrm{VQ}}\cdot Q_{\textrm{VQ}}$ bits for representation. However, it is observed that the probability mass function (PMF) of codewords is not uniform (see Fig.~\ref{fig:codebook_compare}, where warmer color represents higher probability), implying potential entropy coding gain. To this end, Huffman coding is applied based on the PMF of codewords. Denoting the average length of Huffman codes as $L_{\textrm{HUFF}}$, the compression gain from entropy coding can be expressed as
\begin{align}
\textrm{CR}_{\textrm{EC}}=\frac{L_{\textrm{VQ}}\cdot Q_{\textrm{VQ}}}{L_{\textrm{HUFF}}}.\label{fun:CR_EC}
\end{align}

\subsection{Performance Evaluation}

In this subsection, the aforementioned blocks in system are integrated together and performance of the whole CPRI compression framework is evaluated. The compression gain and the EVM distortion are investigated. We summarize the main results of the proposed CPRI compression algorithm in a theorem as follows.
\begin{theorem}\label{thm:CR}
For a given vectorization method and a given vector length, the proposed vector quantization based lossy compression algorithm for CPRI link can achieve the rate distortion trade-off given by
\emph{
\begin{align}
\textrm{CR}_{\textrm{CPRI}}=\frac{1}{1/(\textrm{CR}_{\textrm{CPR}}\cdot \textrm{CR}_{\textrm{DEC}}\cdot \textrm{CR}_{\textrm{VQ}}\cdot \textrm{CR}_{\textrm{EC}})+Q_{\textrm{BS}}/(2Q_{0} \cdot N_{\textrm{BS}})},\label{fun:CR}
\end{align}}
and
\emph{\begin{align}
\textrm{EVM}_{\textrm{TD}}(\%)\triangleq\sqrt{\frac{\sum_{m=1}^{M}|\bm{s}_{\textrm{IN}}(m)-\bm{s}_{\textrm{OUT}}(m)|^2}{\sum_{m=1}^{M}|\bm{s}_{\textrm{IN}}(m)|^2}}\times 100,\label{fun:TD_EVM}
\end{align}}
for large value of $M$, where the compression gains in the denominator of \eqref{fun:CR} are given by \eqref{fun:CR_CPR}, \eqref{fun:CR_DEC}, \eqref{fun:CR_VQ}, and \eqref{fun:CR_EC}, respectively;  $Q_{0}$ is the uncompressed bitwidth of I or Q component of each sample; $Q_{\textrm{BS}}$ and $N_{\textrm{BS}}$ are parameters for block scaling; $\bm{s}_{\textrm{IN}}$ is the input sequence to compression algorithm; $\bm{s}_{\textrm{OUT}}$ is the output from decompression algorithm; $M$ is the number of input or output complex samples.
\end{theorem}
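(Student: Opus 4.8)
The plan is to establish Theorem~\ref{thm:CR} by a direct bit-accounting argument along the processing chain of Fig.~\ref{fig:framework}, supplemented by a standard asymptotic argument for the two ``large $M$'' clauses. Concretely, I would track the number of bits needed to represent the sample stream as it enters and leaves each block, using the fact that the blocks are applied in series so that their individual compression gains combine multiplicatively, while the only block that \emph{adds} bits (block scaling) contributes a purely additive overhead.

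\textbf{Accounting for the rate.} I would begin with the $M$ complex input samples, each carrying an I and a Q component of $Q_0$ bits, for a total uncompressed payload of $2MQ_0$ bits, and then propagate this count through the chain. \emph{CP Removal} and \emph{Decimation} keep the per-component bitwidth fixed but thin the retained samples by the factors $\textrm{CR}_{\textrm{CPR}}$ and $\textrm{CR}_{\textrm{DEC}}$ of \eqref{fun:CR_CPR} and \eqref{fun:CR_DEC}; the \emph{Vector Quantizer} keeps the sample count fixed but maps each $Q_0$-bit component onto an effective $Q_{\textrm{VQ}}$-bit index, the factor $\textrm{CR}_{\textrm{VQ}}$ of \eqref{fun:CR_VQ}; and \emph{Entropy Coding} shrinks the index stream by the further factor $\textrm{CR}_{\textrm{EC}}$ of \eqref{fun:CR_EC}. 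Since these four operations act multiplicatively on the representation size, the quantized-and-coded payload is $2MQ_0/(\textrm{CR}_{\textrm{CPR}}\,\textrm{CR}_{\textrm{DEC}}\,\textrm{CR}_{\textrm{VQ}}\,\textrm{CR}_{\textrm{EC}})$ bits, to which I would add the side information emitted by \emph{Block Scaling}, $Q_{\textrm{BS}}$ bits per $N_{\textrm{BS}}$ samples, which a careful count of the rate at which scaling operates turns into $M Q_{\textrm{BS}}/N_{\textrm{BS}}$ bits. Taking the ratio of input to output bits gives
\begin{align}
\textrm{CR}_{\textrm{CPRI}}=\frac{2MQ_0}{\dfrac{2MQ_0}{\textrm{CR}_{\textrm{CPR}}\,\textrm{CR}_{\textrm{DEC}}\,\textrm{CR}_{\textrm{VQ}}\,\textrm{CR}_{\textrm{EC}}}+\dfrac{M Q_{\textrm{BS}}}{N_{\textrm{BS}}}},
\end{align}
and dividing numerator and denominator by $2MQ_0$ yields \eqref{fun:CR}. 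The hypothesis that $M$ is large is exactly what makes this identity exact rather than approximate: it renders negligible the Huffman dictionary overhead, the padding of the last incomplete length-$L_{\textrm{VQ}}$ vector, and any partial block-scaling block, and it is the regime in which Shannon's lossless source coding theorem guarantees that the realized Huffman rate attains the value embedded in \eqref{fun:CR_EC}.

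\textbf{Accounting for the distortion.} Equation \eqref{fun:TD_EVM} is the definition of the time-domain EVM between the sequence $\bm{s}_{\textrm{IN}}$ fed to the compressor and the reconstruction $\bm{s}_{\textrm{OUT}}$, so on the distortion side the point to verify is that it is the operationally meaningful companion to the rate formula. Here I would argue that CP removal and clean decimation are lossless up to the oversampling redundancy they are designed to remove, so the end-to-end error is governed by the vector quantization step, whose codebook is trained (Section~\ref{sec:algorithm:vq}) precisely to minimize the expected squared Euclidean distortion; by the law of large numbers the empirical averages in the numerator and denominator of \eqref{fun:TD_EVM} converge, as $M\to\infty$, to the expected per-sample distortion and the expected per-sample power, so the realized EVM concentrates at the rate--distortion operating point of the trained codebook. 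Combined with the rate identity, this is the asserted trade-off.

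\textbf{Anticipated obstacle.} The only genuinely delicate bookkeeping is the normalization of the block-scaling overhead term: one must be explicit about the sampling rate to which $N_{\textrm{BS}}$ is referred and about whether scaling is inserted before or after decimation, since mislocating it would drag spurious $\textrm{CR}_{\textrm{CPR}}$ or $\textrm{CR}_{\textrm{DEC}}$ factors into the second summand of \eqref{fun:CR}; this is precisely where the detailed conventions of Appendix~\ref{app:details} must be invoked. Everything else is routine accounting together with the two standard asymptotic arguments above.
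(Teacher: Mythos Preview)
Your proposal is correct and follows essentially the same bit-accounting argument as the paper: input bits $2MQ_0$, multiplicative reduction through the four serial blocks, additive block-scaling overhead $MQ_{\textrm{BS}}/N_{\textrm{BS}}$, then take the ratio and simplify to \eqref{fun:CR}. The paper's own proof is in fact terser than yours---it omits your large-$M$ asymptotic justifications and treats \eqref{fun:TD_EVM} purely as the standard definition of EVM---so your version is, if anything, more careful than what appears in the paper, and your flagged bookkeeping concern about the $N_{\textrm{BS}}$ normalization is well placed.
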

\begin{proof}
From the description of proposed algorithm, if $M$ complex samples are considered as input to the module, where each complex component is represented by $Q_{0}$ bits, after compression, $2M/(\textrm{CR}_{\textrm{CPR}}\cdot \textrm{CR}_{\textrm{DEC}}\cdot L_{\textrm{VQ}})$ number of binary strings are transmitted on CPRI link, where the average length for each string is $L_{\textrm{HUFF}}$. For \emph{Block Scaling} block, $Q_{\textrm{BS}}$ number of extra bits are needed for every  $N_{\textrm{BS}}$ complex samples. Hence, the final compression gain by CPRI compression can be expressed as
\begin{align}
\textrm{CR}_{\textrm{CPRI}} &\triangleq \frac{\textrm{number of bits input to the module}}{\textrm{number of bits transmitted on CPRI link}}\nonumber\\
&=\frac{2Q_{0} \cdot M}{L_{\textrm{HUFF}}\cdot 2M/(\textrm{CR}_{\textrm{CPR}}\cdot \textrm{CR}_{\textrm{DEC}}\cdot L_{\textrm{VQ}})+Q_{\textrm{BS}}\cdot M/N_{\textrm{BS}}},\nonumber
\end{align}
which further gives the expression in \eqref{fun:CR}.  Note that if any of the blocks is not enabled/present, the corresponding compression gain is set to $1$ (especially, if the block scaling is not enabled, $Q_{\textrm{BS}}$ is set to $0$).

Equation \eqref{fun:TD_EVM} is the standard form of error vector magnitude (EVM), which is a measure of deviation of constellation points from their ideal locations. In this study, EVM is used to quantify the distortions introduced by compression.
\end{proof}

Time-domain EVM calculates the distortion over the whole bandwidth. However, for LTE signals, only part of the bandwidth carries useful information. This motivates the use of frequency-domain EVM instead of the time-domain EVM \eqref{fun:TD_EVM} as the distortion measure, i.e.,
\begin{align}
\textrm{EVM}_{\textrm{FD}}(\%)\triangleq\sqrt{\frac{\sum_{n\in\mathcal{B}}|\tilde{\bm{s}}_{\textrm{IN}}(n)-\tilde{\bm{s}}_{\textrm{OUT}}(n)|^2}{\sum_{n\in\mathcal{B}}|\tilde{\bm{s}}_{\textrm{IN}}(n)|^2}}\times 100,\label{fun:FD_EVM}
\end{align}
where $\tilde{\bm{s}}_{\textrm{IN}}(n)$ and $\tilde{\bm{s}}_{\textrm{OUT}}(n)$ are transformed signals by FFT for input and output respectively, and $\mathcal{B}$ is the collection of indices corresponding to the utilized bandwidth.

\subsection{Universal Compression}

In principle, vector quantization codebook constructed from a particular set of training samples can produce the intended performance if the real samples do not deviate significantly from the training samples in statistical sense. However, in practice, system parameters or properties, such as channel types, SNRs, and modulation schemes, may not be known perfectly, or may not always match those assumed for the training samples. This motivates the need for a universal compression technique, i.e., a robust codebook that can provide reasonable or acceptable performances in all or a large range of system parameters, although the codebook is not necessarily optimal for a specific system setting. To understand if such a universal codebook exists, we investigate the performance of codebooks produced using mismatched training samples, in order to find the sensitivity factors impacting the performance. A key step is to discover the most robust parameter contributing to the universal codebook. If this procedure is infeasible, i.e., such a parameter does not exist, a larger training sample pool mixing with different parameters should be considered, such that the trained codebook could reflect the distributions of all parameters.

For example, for uplink LTE samples with $4$ times target compression ratio, if vector length $L_{\textrm{VQ}}=2$ with decimation value $5/8$ is performed, constructed codebook is rather insensitive to modulation methods or channel types (see TABLE~\ref{tab:level2_mismatch_modulation} and TABLE~\ref{tab:level2_mismatch_channel}), but is relatively more sensitive to SNRs (see TABLE~\ref{tab:level2_mismatch_SNR}). Nevertheless, the $5$dB codebook can be adopted as the universal codebook, since its EVM performances are acceptable for all SNR training samples. Based on these observations, we conclude that the codebook obtained from $5$dB SNR, $64$QAM modulation, and AWGN channel can be a suitable universal codebook, which we shall assume for further performance evaluation in Section~\ref{sec:simulation:ul}.

\begin{table}[t!]
\caption{SNRs mismatch for uplink samples implemented with vector quantization ($L_{\textrm{VQ}}=2$). Relative performance marked in percentage is compared per column.}
\centering
\begin{tabular}{c | c c c c c}
\toprule
EVM $(\%)$	&\multicolumn{5}{c}{Evaluated samples}	\\
\midrule
\multirow{4}{*}{Training samples}	   & 	    &$0$dB	&$5$dB & $10$dB	&$20$dB	\\
	                                   &$0$dB	&{\color{red} $2.55~(\,\,\,-\,\,\,)$}	&$2.61~(\,\,3\%\,)$	&$2.72~(\,\,7\%\,)$	& $2.89~(15\%)$\\
	                                   &$5$dB	&$2.68~(\,\,5\%\,)$	&{\color{red} $2.54~(\,\,\,-\,\,\,)$}	&$2.56~(\,\,1\%\,)$	& $2.65~(\,\,5\%\,)$\\
                                       &$10$dB  &$3.11~(22\%)$ &$2.89~(14\%)$   &{\color{red} $2.54~(\,\,\,-\,\,\,)$}   &$2.56~(\,\,2\%\,)$ \\
	                                   &$20$dB	&$3.41~(34\%)$	&$2.98~(17\%)$	&$2.69~(\,\,6\%\,)$	&{\color{red} $2.52~(\,\,\,-\,\,\,)$}\\
\bottomrule
\end{tabular}
\label{tab:level2_mismatch_SNR}
\end{table}

\begin{table}[t!]
\caption{Modulations mismatch for uplink samples implemented with vector quantization ($L_{\textrm{VQ}}=2$). Relative performance marked in percentage is compared per column.}
\centering
\begin{tabular}{c | c c c c }
\toprule
EVM $(\%)$	&\multicolumn{4}{c}{Evaluated samples}	\\
\midrule
\multirow{4}{*}{Training samples}	&	&QPSK	&$16$QAM	&$64$QAM	\\
	&QPSK	&{\color{red} $2.54~(\,\,-\,\,)$}	&$2.57~(1\%)$	&$2.64~(4\%)$	\\
	&$16$QAM	&$2.54~(0\%)$	&{\color{red} $2.54~(\,\,-\,\,)$}	&$2.60~(2\%)$	\\
	&$64$QAM	&$2.56~(1\%)$&$2.56~(1\%)$	&{\color{red} $2.54~(\,\,-\,\,)$}	\\
\bottomrule
\end{tabular}
\label{tab:level2_mismatch_modulation}
\end{table}

\begin{table}[t!]
\caption{Channel types mismatch for uplink samples implemented with vector quantization ($L_{\textrm{VQ}}=2$). Relative performance marked in percentage is compared per column.}
\centering
\begin{tabular}{c | c c c }
\toprule
EVM $(\%)$	&\multicolumn{3}{c}{Evaluated samples}	\\
\midrule
\multirow{3}{*}{Training samples}	&	&AWGN	&Ped B	\\
	&AWGN	&{\color{red} $2.54~(\,\,-\,\,)$}	&$2.56~(2\%)$	\\
	&Ped B	&$2.64~(4\%)$	&{\color{red} $2.52~(\,\,-\,\,)$}		\\
\bottomrule
\end{tabular}
\label{tab:level2_mismatch_channel}
\end{table}

However, if we consider the same system setup but assuming $L_{\textrm{VQ}}=3$, the constructed codebook is quite sensitive to SNRs, such that no codebook based on a single SNR could be adopted as the universal codebook (see TABLE~\ref{tab:level3_mismatch_SNR}). If trained codebooks are utilized to compress mismatched target SNR samples, the distortions are even larger than the ones from $L_{\textrm{VQ}}=2$ (compare with TABLE~\ref{tab:level2_mismatch_SNR}). To solve this problem, we construct a larger training sample set, which contains subframes with diverse SNRs, and perform training over this database with larger SNR region, then the resulting codebook may not be optimal for the particular SNR, but it can achieve acceptable performance ($\sim2.1\%$ EVM in last row of TABLE~\ref{tab:level3_mismatch_SNR}, which is better than the EVM from $L_{\textrm{VQ}}=2$ in TABLE~\ref{tab:level2_mismatch_SNR}) for the whole SNR region of concern.

\begin{table}[t!]
\caption{SNRs mismatch for uplink samples implemented with vector quantization ($L_{\textrm{VQ}}=3$). Relative performance marked in percentage is compared per column.}
\centering
\begin{tabular}{c | c c c c c}
\toprule
EVM $(\%)$	&\multicolumn{5}{c}{Evaluated samples}	\\
\midrule
\multirow{6}{*}{Training samples}	&	&$0$dB &$5$dB	&$10$dB	&$20$dB	\\
	&$0$dB	&{\color{red} $1.71~(\,\,\,-\,\,\,)$}	&$2.75~(65\%)$	&$2.72~(68\%)$	&$2.72~(69\%)$\\
    &$5$dB	&$2.79~(63\%)$ &{\color{red}$1.67~(\,\,\,-\,\,\,)$}	&$2.70~(67\%)$	&$2.68~(66\%)$\\
	&$10$dB	&$2.88~(68\%)$	& $2.77~(66\%)$	&{\color{red} $1.62~(\,\,\,-\,\,\,)$}	&$2.55~(58\%)$\\
	&$20$dB	&$2.90~(70\%)$	&$2.80~(68\%)$	&$2.60~(60\%)$	&{\color{red} $1.61~(\,\,\,-\,\,\,)$}\\
    &{\color{blue} mixed SNRs} & {\color{blue} $2.17~(27\%)$} &{\color{blue}  $2.15~(29\%)$} & {\color{blue} $2.12~(31\%)$} & {\color{blue} $2.07~(29\%)$}\\
\bottomrule
\end{tabular}
\label{tab:level3_mismatch_SNR}
\end{table}

\section{Reduced Complexity Quantization Methods}
\label{sec:advanced}

The vector codebook constructed using Lloyd algorithm performs reasonably well to quantize LTE samples. However, for lower EVM requirements, the size of codebook for vector quantization should be larger, which leads to longer training time, larger storage space, and slower encoding process. This motivates the need to design a low-complexity vector quantization algorithm.

\subsection{Multi-Stage Vector Quantization}

The structured codebook with multiple quantization stages is one way to achieve complexity reduction. In this so called Multi-Stage Vector Quantization (MSVQ) scheme \cite{gersho1992vector}\cite{juang1982multiple}, each lower stage codebook (smaller size) partitions its upper stage codebook (larger size). To quantize (see Fig.~\ref{fig:multi_stage}), we start with the lowest stage codebook and obtain the quantization vector. The resultant quantized vector is then used to identify one of the partitions of the upper stage codebook for quantization. The process continues until the uppermost stage is achieved. Fig.~\ref{fig:codebook_compare_MSVQ} illustrates an example of two-stage MSVQ, where the red grid represents lower stage codebook boundaries, and blue grid represents higher stage codebook boundaries.

\begin{figure}[t!]
\includegraphics[scale=0.6]{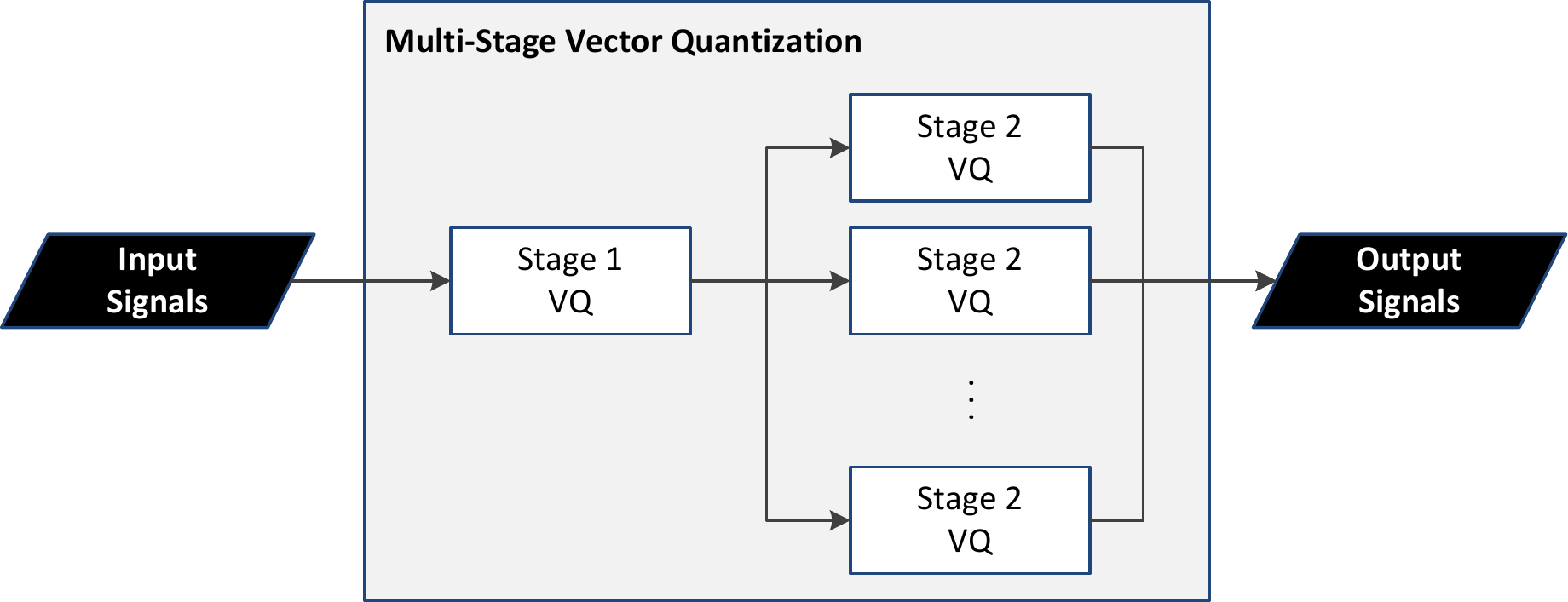}
\centering
\caption{Process chain of two-stage vector quantization. Quantization result from the first stage is utilized for second stage quantization.}
\label{fig:multi_stage}
\end{figure}

In principle, MSVQ optimizes the target distortion function in multiple stages. Hence for the same compression target, its achievable EVM cannot be better than that achieved by the classical single stage VQ in theory (see illustration of codebooks of VQ and MSVQ for the same training samples in Fig.~\ref{fig:codebook_compare}). However, we observe that the distortion performance of MSVQ can be quite close to VQ (also see simulation results in the next section). Meanwhile, the codebook search complexity and codebook training time for MSVQ is remarkably improved compared to the original VQ, due to relatively smaller codebook size in each stage. Note that MSVQ does not decrease the codebook size at the highest stage.

\subsection{Unequally Protected Multi-Group Quantization}

Although MSVQ reduces significantly the training and searching complexities of regular VQ, the uppermost stage codebook size remains the same as VQ. We propose another novel quantization method for low complexity and latency, referred as Unequally Protected Multi-Group Quantization (UPMGQ). The intuition of this quantization method comes from an expansion coding scheme for continuous-valued sources in \cite{si2014lossy}, where the target sources are expanded into independent parallel levels and unequal protection degrees are performed on expanded levels based on their importance and contributions to the distortion. Inspired by the idea of expansion, a combination of expansion coding and vector quantization scheme is proposed. More precisely, each sample can be written as a sum of powers of $2$ through binary expansion (or as a binary number), where each power of $2$ is referred to as a ``level''.
For example, the number $17.5$ can be represented as $1\times2^4 + 0\times2^3 + 0\times2^2 + 0\times2^1 + 1\times2^0 + 1\times2^{-1}$ (or $10001.1$), which takes the value $1$ for level $4$, level $0$ and level $-1$, and the value $0$ for level $3$, level $2$ and level $1$.
The sign bit and the higher levels are clearly more significant than the lower levels, with respect to low distortion compression. This observation implies that levels after expansion should not be treated equally, and more protection should given to the more significant levels.

\begin{figure}[t!]
\includegraphics[scale=0.6]{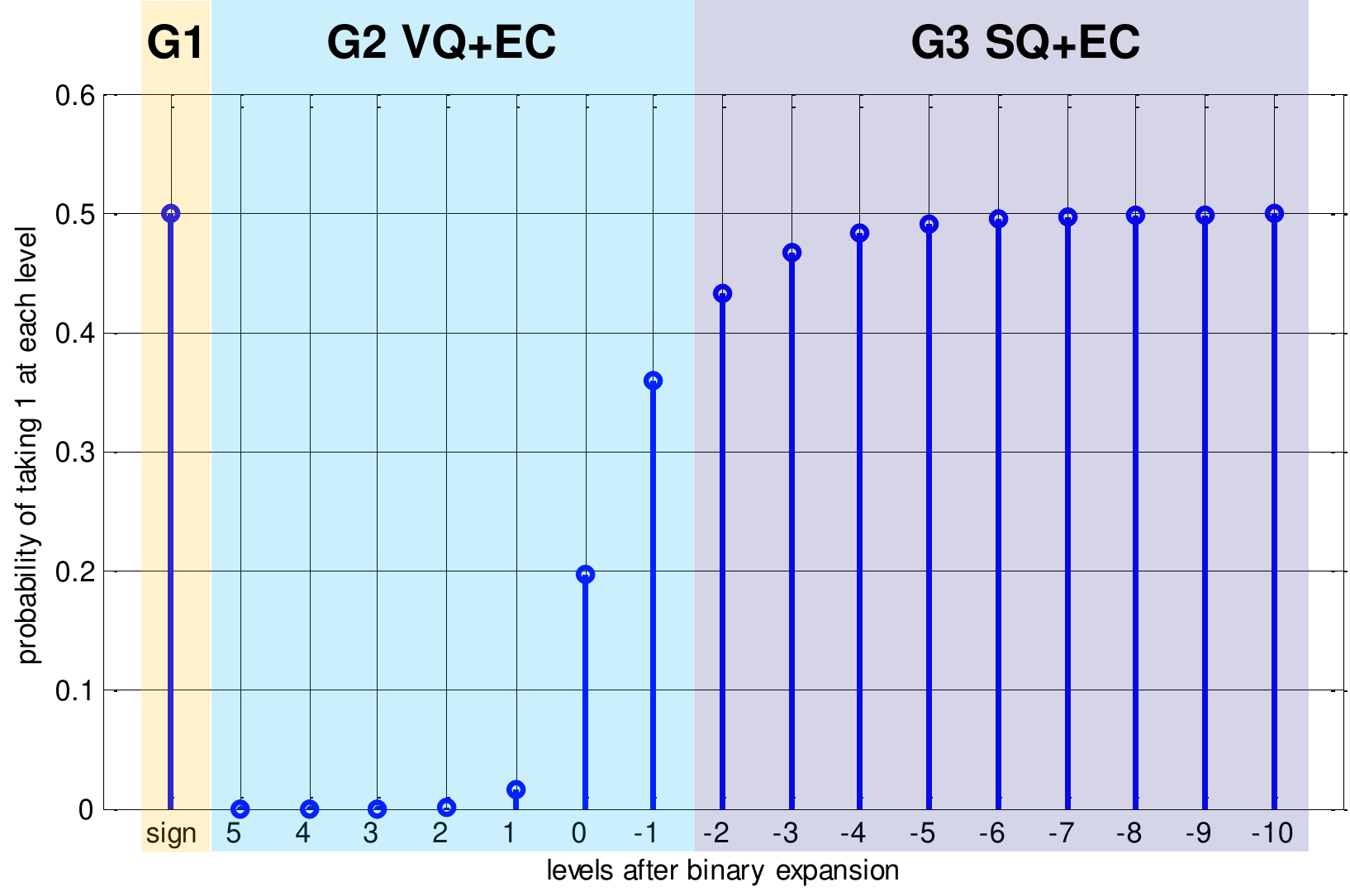}
\centering
\caption{Illustration of unequally protected multi-group quantization based on binary expansion of normalized signals.}
\label{fig:multi_group}
\end{figure}

Fig.~\ref{fig:multi_group} gives an example to illustrate the statistics of levels after binary expansion for OFDM/SC-FDM (e.g. LTE/LTE-Advanced) normalized signals. X-axis is the level index for base-$2$ expansion, and Y-axis shows the corresponding probability of taking value $1$ at each level. Evidently, the probability of taking $1$ tends to $0.5$ for the lower levels, and tends to $0$ for higher levels due to power constraint. Besides, the sign of signal samples constitutes a separate level, which approximates Bernoulli $0.5$ distribution due to the symmetry of signals. Based on this observation, in order to enable low distortion and to efficiently utilize coding rate, the levels are partitioned into multiple groups and different quantization schemes are applied to different groups:
\begin{itemize}
\item[1)]	For sign bit level ($\bm{\mathsf{G1}}$ in Fig.~\ref{fig:multi_group}), no quantization or entropy coding is performed. $1$ bit is utilized to perfectly represent the sign of signal, since any error from the sign will lead to large distortion.
\item[2)]	For higher levels above a threshold ($\bm{\mathsf{G2}}$ in Fig.~\ref{fig:multi_group}), vector quantization combining with entropy coding can be utilized to exploit the correlation among signals, and to fully compress the redundancy in codebook after vector quantization.
\item[3)]	For lower levels below the threshold ($\bm{\mathsf{G3}}$ in Fig.~\ref{fig:multi_group}), scalar quantization with entropy coding can be utilized.
For these levels, correlation among levels becomes minute due to their smaller weights in the expansion. If the threshold is chosen such that all lower levels are almost Bernoulli $0.5$ distributed, then, entropy coding may not be essential in this case.
\end{itemize}

The threshold $\theta$ that separates the higher levels and the lower levels is a design parameter, which can be tuned or optimized according to a desired objective. More precisely, for a real valued signal $s$, its higher and lower group values as defined in UPMGQ are given by
\begin{align}
&s_{\textrm{HIGH}} = 2^{\theta}\cdot\lfloor|s|\cdot2^{-\theta}\rfloor,\label{equ:high_part}\\
&s_{\textrm{LOW}} = |s|-2^{\theta}\cdot\lfloor|s|\cdot2^{-\theta}\rfloor.\label{equ:low_part}
\end{align}

The role of \emph{UPMGQ} block is to substitute \emph{Vector Quantization} and \emph{Entropy Coding} blocks in the framework of CPRI compression. A detailed description of process chain inside \emph{UPMGQ} block is shown in Fig.~\ref{fig:multi_group_framework}.
The output from UPMGQ are $3$ groups of data streams, which are further transmitted on the CPRI link. The compression gain from \emph{UPMGQ} block alone can be calculated as follows.
\begin{align}
\textrm{CR}_{\textrm{UPMGQ}}=\frac{Q_0}{1+L_{\textrm{HIGH}}/L_{\textrm{UPMGQ}}+L_{\textrm{LOW}}},\label{fun:CR_UPMGQ}
\end{align}
where $L_{\textrm{HIGH}}$ is the average length of codewords (after entropy coding, if applied) for high levels group ($\bm{\mathsf{G2}}$), and $L_{\textrm{LOW}}$ is the corresponding one for low levels group ($\bm{\mathsf{G3}}$) (scalar quantization assumed); $L_{\textrm{UPMGQ}}$ is the vector length of VQ in high levels group; the number $1$ in numerator is the rate for sign bit ($\bm{\mathsf{G1}}$) (no compression assumed).

\begin{figure}[t!]
\includegraphics[scale=0.6]{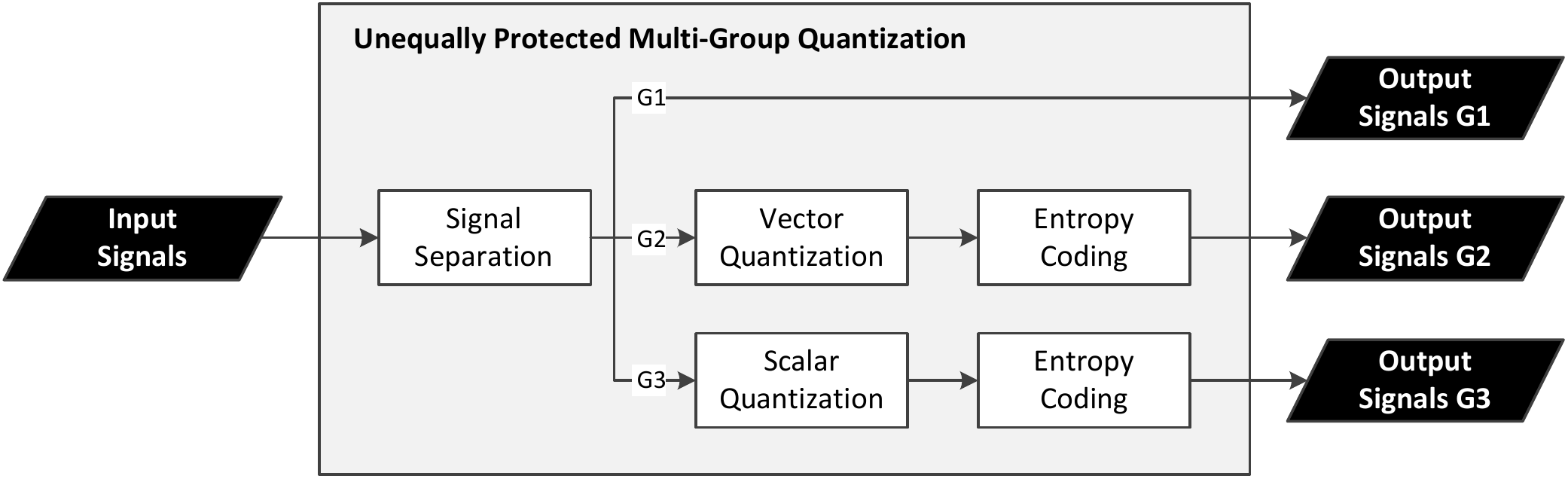}
\centering
\caption{Process chain in the multi-group quantization block for CPRI compression.}
\label{fig:multi_group_framework}
\end{figure}

The most important benefit of UPMGQ is also in complexity reduction. In particular, the complexities of UPMGQ with different solutions are comparable (with the same threshold $\theta$), which implies that UPMGQ is more capable of reducing the complexity for quantization with larger resolution (which can be observed in TABLE~\ref{tab:complexity}). The reason is that as long as enough bits are allocated to the higher levels group (i.e., enough bits to exceed the entropy of the higher levels group), increasing resolution is basically equivalent to adding levels to the lower levels group, which may not lead to significant complexity augment, since scalar quantization is performed for this group.

%

\subsection{Comparison of MSVQ and UPMGQ}
\label{sec:complexity_reduction}

In this subsection, we analyze and compare the complexities of MSVQ and UPMGQ.

MSVQ aims to reduce the complexity of searching operations, but is not capable of decrementing the codebook size. In fact, due to introducing extra stages, the total codebook size is even increased slightly for MSVQ. For example, for a two-stage MSVQ with $Q_{\textrm{MSVQ1}}$ and $Q_{\textrm{MSVQ2}}$ as the resolutions of each stage respectively, if performing vector length $L_{\text{MSVQ}}$ vector quantization on each of the stages, the complexities of searching operations (SO) and codebook size (CS) are given by
\begin{align}
&\textrm{SO}_{\textrm{MSVQ}}=2^{Q_{\textrm{MSVQ1}}\cdot L_{\textrm{MSVQ}}} + 2^{Q_{\textrm{MSVQ2}}\cdot L_{\textrm{MSVQ}}},\label{fun:SO_MSVQ}\\
&\textrm{CS}_{\textrm{MSVQ}}=2^{Q_{\textrm{MSVQ1}}\cdot L_{\textrm{MSVQ}}} + 2^{(Q_{\textrm{MSVQ1}}+Q_{\textrm{MSVQ2}})\cdot L_{\text{MSVQ}}}.\label{fun:CS_MSVQ}
\end{align}

On the other hand, UPMGQ tries to reduce the complexities of searching and storing at the same time. However, the reduction in searching complexity may not be as significant as that for MSVQ. More precisely, for UPMGQ with higher levels group using vector length $L_{\textrm{UPMGQ}}$ and resolution $Q_{\textrm{HIGH}}$ vector quantization, and with lower levels group using resolution $Q_{\textrm{LOW}}$ scalar quantization, the complexities of searching operations (SO) and codebook size (CS) are given by
\begin{align}
&\textrm{SO}_{\textrm{UPMGQ}}=2^{Q_{\textrm{HIGH}}\cdot L_{\textrm{UPMGQ}}} + 2^{Q_{\textrm{LOW}}},\label{fun:SO_UPMGQ}\\
&\textrm{CS}_{\textrm{UPMGQ}}=2^{Q_{\textrm{HIGH}}\cdot L_{\textrm{UPMGQ}}} + 2^{Q_{\textrm{LOW}}}.\label{fun:CS_UPMGQ}
\end{align}

\begin{table}[t!]
\caption{Complexity reduction of MSVQ and UPMGQ, without much degradation of EVM performance ($L_{\textrm{VQ}}=L_{\textrm{MSVQ}}=L_{\textrm{UPMGQ}}=2$). }
\centering
\begin{tabular}{l | l l l }
\toprule
					&\multicolumn{3}{c}{{\color{blue} EVM $(\%)$} / {\color{red} Codebook Size} /  {\color{magenta} Searching Operations} }																											 \\
\cmidrule{2-4}
					&\multicolumn{1}{c}{$Q=5$}											&\multicolumn{1}{c}{$Q=6$}											 &\multicolumn{1}{c}{$Q=7$} 												 \\
\midrule				
 $\;\;\;$VQ			&{\color{blue}$4.45$} / {\color{red}$1024$} / {\color{magenta}$1024$} 			 &{\color{blue}$2.43$} / {\color{red}$4096$} / {\color{magenta} $4096$ }			 &{\color{blue}$1.60$} / {\color{red} $16384$} / {\color{magenta} $16394$}				\\
$\;$MSVQ (two-stage)	&{\color{blue}$4.48$} / {\color{red}$1040$} / {\color{magenta} $\;\;80\;\;$}		 &{\color{blue}$2.45$} / {\color{red}$4160$} / {\color{magenta} $\;128\;$ }			 &{\color{blue}$1.62$} / {\color{red} $16448$} / {\color{magenta} $\;\;320\;\;$}			\\
UPMGQ ($\theta=0$)	 	&{\color{blue}$4.62$} / {\color{red}$\;264\;$} / {\color{magenta} $\;264\;$} 		 &{\color{blue}$2.52$} / {\color{red}$\;272\;$} / {\color{magenta} $\;272\;$} 		 &{\color{blue}$1.63$} / {\color{red}$\;\;288\;\;$} / {\color{magenta} $\;\;288\;\;$}		\\
UPMGQ ($\theta=-1$)		&{\color{blue}$4.59$} / {\color{red}$1028$} / {\color{magenta}$1028$ }			 &{\color{blue}$2.50$} / {\color{red}$1032$} / {\color{magenta} $1032$} 			 &{\color{blue}$1.62$} / {\color{red} $\;1040\;$}   / {\color{magenta} $\;1040\;$}			\\
\bottomrule
\end{tabular}
\label{tab:complexity}
\end{table}

TABLE~\ref{tab:complexity} illustrates the comparison of the search operation complexities, codebook size and the achieved EVM for VQ, MSVQ, and UPMGQ with vector length $2$, assuming an uplink AWGN channel, $5$dB SNR and $64$QAM modulation. Clearly, MSVQ and UPMGQ achieve significant complexity reduction, without significant impact to EVM performance.
Finally, we note that UPMGQ can also be incorporated with MSVQ to further reduce the searching complexity of the vector quantization procedure in $\bm{\mathsf{G2}}$.


\section{Simulation Results}
\label{sec:simulation}

In this section, link level simulation results are presented to illustrate the performance of the proposed vector quantization based CPRI compression algorithms.

\subsection{Downlink}
\label{sec:simulation:dl}

The simulation results for CPRI downlink are illustrated in Fig.~\ref{fig:downlink_simulaiton}. We assume CP removal block is enabled. We compare the EVM performances of SQ, VQ and MSVQ versus compression gain. Fig.~\ref{fig:downlink_simulaiton} shows that VQ with vector length $2$ and $3$ both perform better than SQ. MSVQ and UPMGQ perform close to regular vector quantization, but with reduced complexities as analyzed in the previous section. It is observed that $4.5$ times compression gain can be achieved with approximately $2\%$ EVM when using vector length $3$.

\begin{figure}[t!]
\hspace{5em}
\includegraphics[scale=0.6]{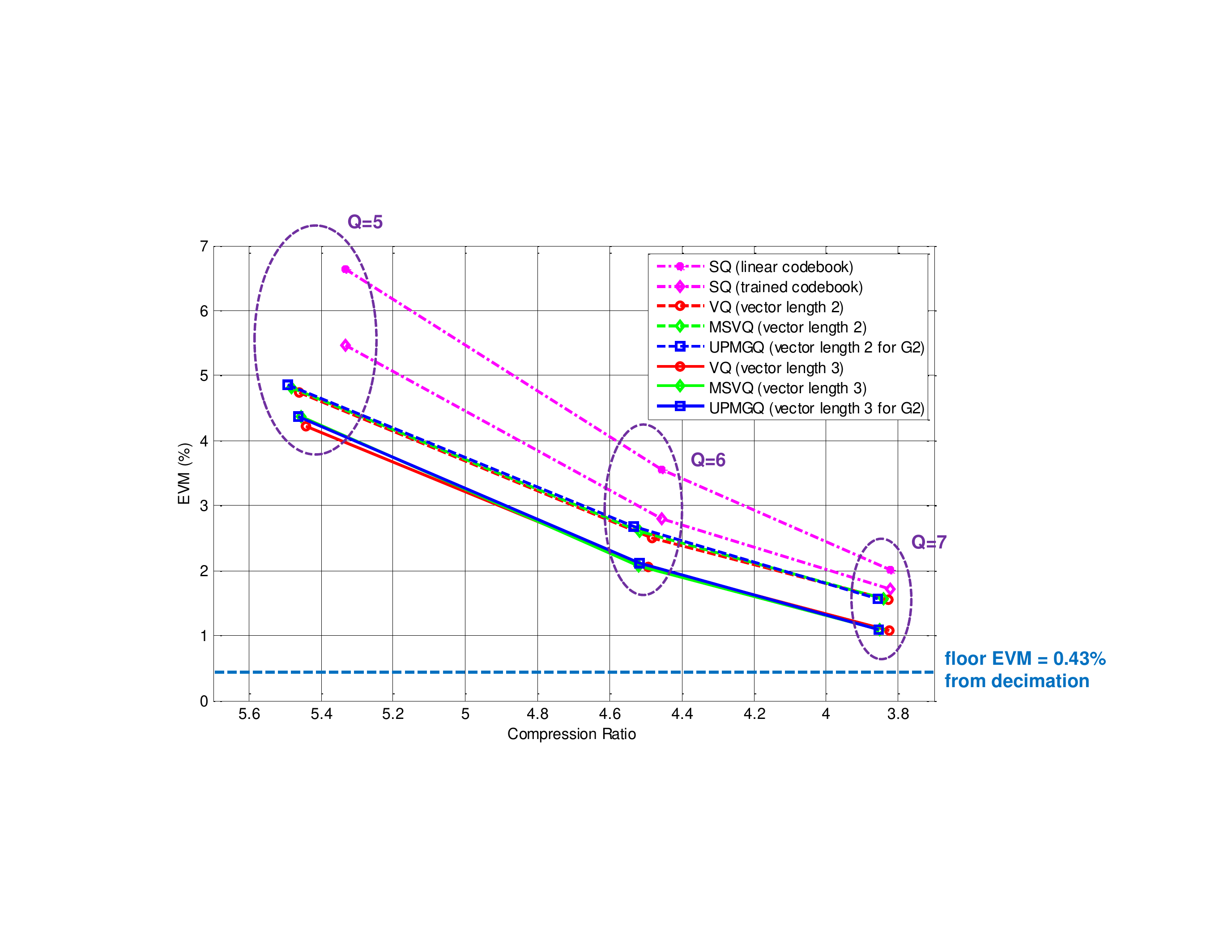}
\centering
\caption{Simulation results for downlink CPRI compression using different quantization methods.}
\label{fig:downlink_simulaiton}
\end{figure}

\subsection{Uplink}
\label{sec:simulation:ul}

Fig.~\ref{fig:uplink_simulaiton} presents the simulation results for CPRI uplink. SC-FDM signals with $64$QAM modulation are generated assuming AWGN channel model and $5$dB SNR. Similar to the downlink results, Fig.~\ref{fig:uplink_simulaiton} shows that VQ with vector length $2$ and $3$ both perform better than SQ. MSVQ and UPMGQ perform close to regular vector quantization. Our simulation results show that the proposed algorithms with vector length $3$ quantization can achieve $4$ times compression ratio with less than $2\%$ EVM.

\begin{figure}[t!]
\hspace{5em}
\includegraphics[scale=0.6]{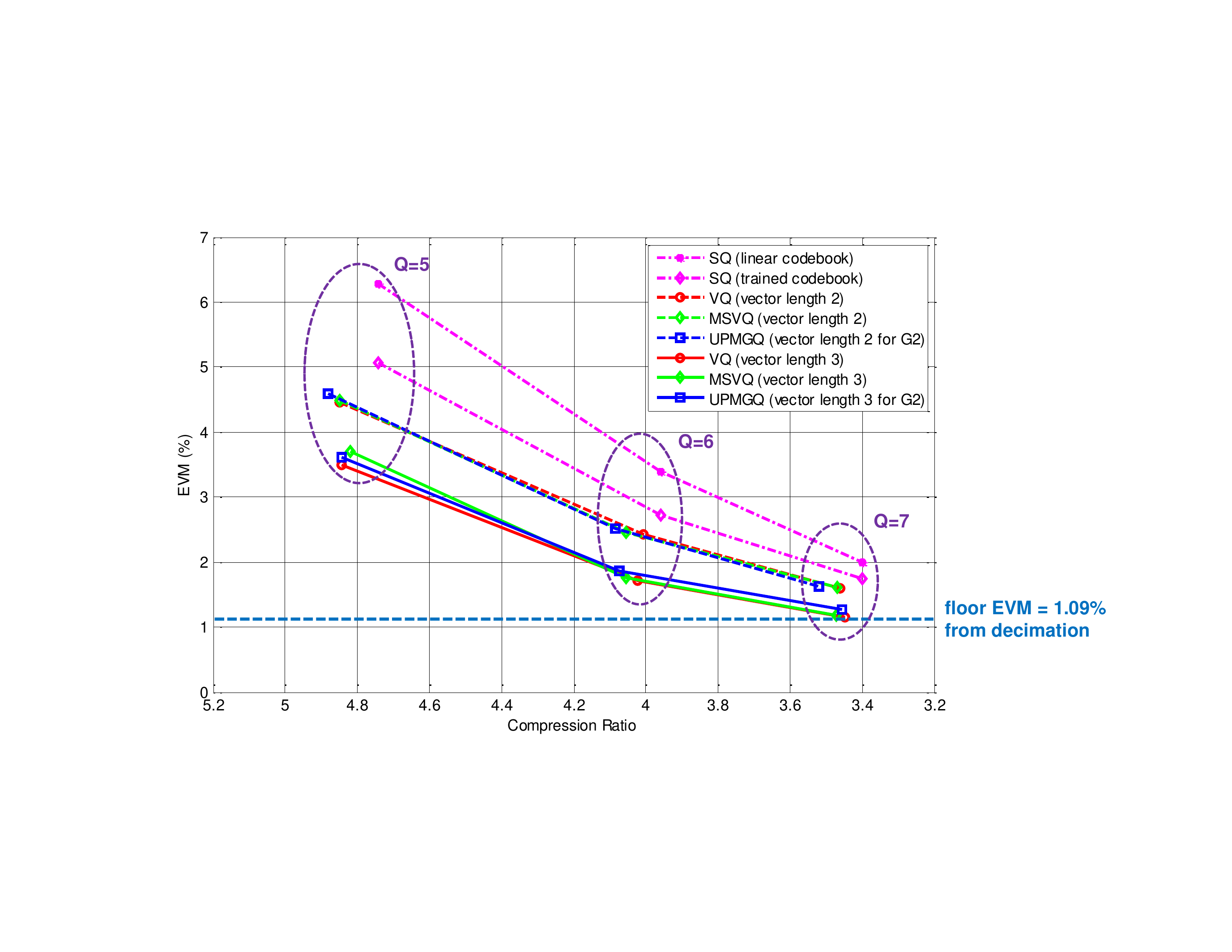}
\centering
\caption{Simulation results for uplink CPRI compression using different quantization methods.}
\label{fig:uplink_simulaiton}
\end{figure}

For vector quantization based CPRI compression to be practical for uplink, it is important for the scheme to work under different channel conditions, including fading, wide range of SNR and Doppler spread. To investigate the robustness of vector quantization, we perform VQ codebook training using training samples generated assuming AWGN channel and $64$QAM. The resultant codebook is then used to quantize different sets of CPRI uplink samples generated assuming 3GPP Ped B channel model, $16$QAM uplink data, for a wide range of SNRs and for different user speeds. The Block Error Rate (BLER) (with and without CPRI compression) versus SNR, is illustrated in Fig.~\ref{fig:bler} for $L_{\textrm{VQ}}=2$. Interestingly, the BLER performances with and without CPRI compression are virtually indistinguishable. The results show that the VQ codebook is remarkably robust against different channel conditions as well as data modulation scheme mismatches.

\begin{figure}[t!]
\includegraphics[scale=0.6]{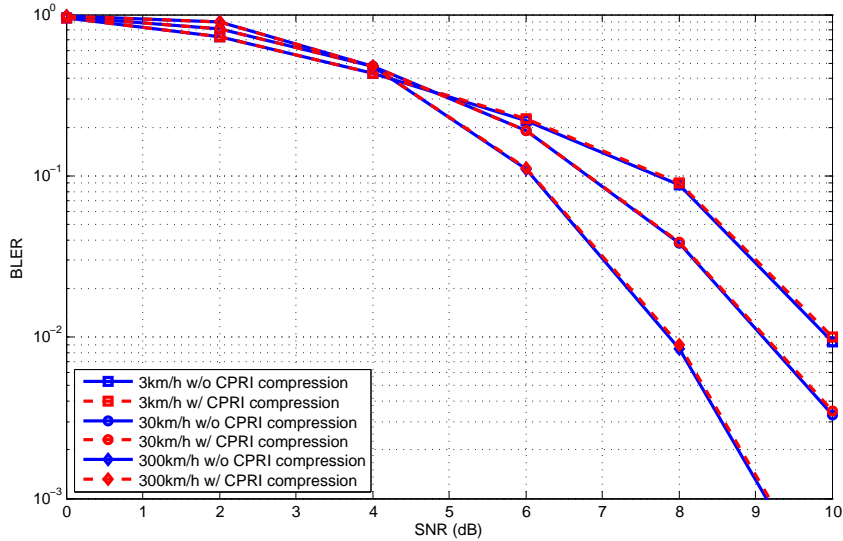}
\centering
\caption{BLER curves of uplink simulation with and without implementation of CPRI compression.}
\label{fig:bler}
\end{figure}


\section{Conclusion}
\label{sec:conclusion}

A vector quantization based CPRI compression framework is proposed in this paper. The vector quantizer codebook is trained using an enhanced Lloyd algorithm, designed to circumvent the persistent convergence to similar local optimum despite multiple Lloyd trials. To achieve complexity reduction without evident performance degradation, multi-stage vector quantization is investigated to significantly reduce the vector codebook search latency, and unequally protected multi-group quantization is proposed to remarkably reduce the codebook size. Comparing with scalar quantization in previous work, vector quantization based CPRI compression is shown to provide superior compression gain by exploiting the time correlation among the I/Q samples. Entropy coding over the resulting vector codebook can achieve additional compression through eliminating the potential redundancy from the distribution of codewords. Link level simulation results show that $4$ times compression for uplink and $4.5$ times compression for downlink can be achieved with approximate $2\%$ EVM distortion. Our simulation results also show remarkable robustness of the vector quantizer codebook against data modulation scheme mismatch, fading, wide range of SNR points and Doppler spread.


\appendices

\section{Details of Cyclic Prefix Removal, Decimation, and Block Scaling}
\label{app:details}

\subsection{Cyclic Prefix Removal}

In OFDM systems, cyclic prefix (CP) is prepended to IFFT output to create a guard period which helps eliminate inter-symbol interference from the previous symbol. However, CP represents a source of time domain redundancy as far as CPRI compression is concerned, and the \emph{CP Removal} block in CPRI compression module aims to eliminate this redundancy.

For instance, for a LTE $10$ MHz system with IFFT output length $L_{\textrm{SYM}}=1024$ and CP length $L_{\textrm{CP}}=128$, the compression gain from CP removal block is $\textrm{CR}_{\textrm{CPR}}=1.125$. We assume CPRI compression by CP removal can be performed for the downlink CPRI signals but not for the uplink CPRI signals because the uplink SC-FDM symbol timing is typically unknown at the RRU.

\subsection{Decimation}

In current LTE systems, the sampling rate exceeds the signal bandwidth, which results in redundancy in the frequency domain. For example, the sampling rate for an LTE $10$ MHz system is $15.36$ MHz, i.e. nearly one third of the bandwidth carries no information. The \emph{Decimation} block in CPRI compression module is introduced to eliminate this redundancy and can be implemented as a standard multi-rate filter \cite{oppenheim1989discrete} (see Fig.~\ref{fig:decimation}).  The input to \emph{Decimation} block is first $K$-times upsampled. The signal is then passed through a low-pass filter with cutoff frequency at $K\cdot f_s/L$, where $f_s$ is the original system sampling rate. Finally, the filtered signal is $L$-times downsampled ($K<L$). The sampling rate for output signal reduces to $K\cdot f_s/L$.

\begin{figure}[t!]
\includegraphics[scale=0.6]{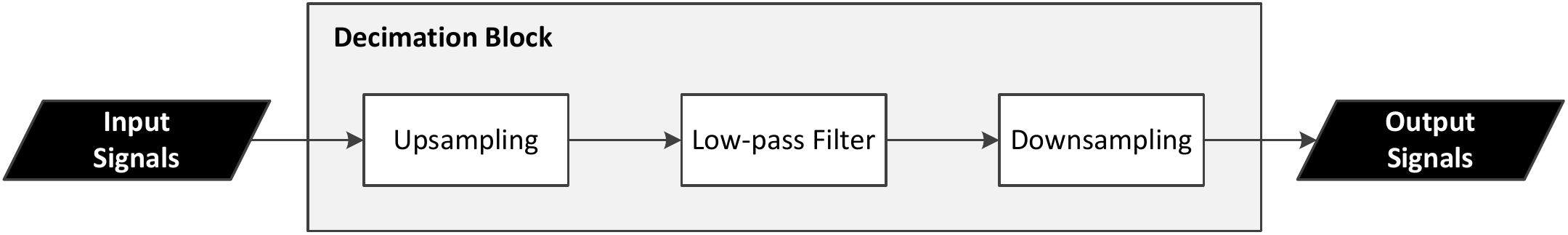}
\centering
\caption{Process chain in the \emph{Decimation} block for CPRI compression.}
\label{fig:decimation}
\end{figure}

However, decimation also causes signal distortion. Smaller $K/L$ provides larger compression gain, but leads to larger distortion as well. Simulation results illustrating the signal distortion as a result of decimation alone are shown in TABLE~\ref{tab:decimation}.  In this simulation, samples are generated from LTE $10$ MHz system uplink with AWGN channel, $5$dB SNR, and $64$QAM modulation. The measurement of distortion is frequency domain EVM defined by \eqref{fun:FD_EVM}. A threshold effect is evident from the table: when decimation value is smaller than around $0.6$, EVM increases remarkably. The EVM from decimation provides a floor distortion for downstream blocks (see Fig.~\ref{fig:downlink_simulaiton} and Fig.~\ref{fig:uplink_simulaiton}).

\begin{table}[t!]
\caption{Effect of different choices of decimation values.}
\centering
\begin{tabular}{c c c c c | c c c}
\toprule
$K/L$	&$15/16$	&$3/4$ 	&$2/3$	&$5/8$	&$15/28$	&$15/32$	&$5/12$\\
\midrule
EVM ($\%$)	&$0.23$	&$0.61$	&$0.96$	&$1.09$	&$26.14$	&$46.19$	&$52.72$\\
\bottomrule
\end{tabular}
\label{tab:decimation}
\end{table}

\subsection{Block Scaling}

Signal on CPRI link can have a large dynamic range. For instance, for uplink CPRI signals, different propagation path loss, shadowing, fading channel and mobility for different users may result in significant variance in general. \emph{Block Scaling}, also known as automatic gain control (AGC), is employed to lower the resolution of signal and to maintain the dynamic range simultaneously \cite{samardzija2012compressed}\cite{hogenauer1981economical}.

In AGC, a sequence of signal samples are grouped into blocks of consecutive samples, where the samples in each block are scaled by a scaling factor which can vary from block to block.  More precisely, assume $N_\textrm{BS}$ number of samples form a block, and the largest absolute value of I/Q components for a particular block $b\in\{1,\ldots,M/N_\textrm{BS}\}$ is determined by
\begin{align}
A(b)=\max_{m=N_{\textrm{BS}}\cdot (b-1)+1,\ldots,N_{\textrm{BS}}\cdot b}\left\{|\textfrak{R}(\bm{s}_{\textrm{DEC}}(m))|,|\textfrak{I}(\bm{s}_{\textrm{DEC}}(m))|\right\},\nonumber
\end{align}
where $\bm{s}_{\textrm{DEC}}(m)$ is the output from the \emph{Decimation} block, and $\textfrak{R}(\cdot)$ and $\textfrak{I}(\cdot)$ represent the real and imaginary parts of a complex-valued sample, respectively.
Next, the corresponding scaling factor for block $b$ is determined by
\begin{align}
S(b)=\left\{
\begin{array}{ll}
\lceil A(b) \rceil,	& \text{for }\lceil A(b) \rceil\leq 2^{Q_{\textrm{BS}}}-1,\nonumber\\
2^{Q_{\textrm{BS}}}-1,	& \text{for }\lceil A(b) \rceil > 2^{Q_{\textrm{BS}}}-1. \nonumber
\end{array}
\right.\nonumber
\end{align}
By this definition, $S(b)$ is an integer that can be represented with $Q_{\textrm{BS}}$ bits. Samples in block $b$ are then scaled to produce an output $\bm{s}_{\textrm{BS}}(m)$ as follows.
\begin{align}
\bm{s}_{\textrm{BS}}(m)=\bm{s}_{\textrm{DEC}}(m)\cdot \frac{2^{Q_{\textrm{VQ}}}-1}{S(b)}, \quad N_{\textrm{BS}}\cdot (b-1)+1 \leq m\leq N_{\textrm{BS}}\cdot b,\nonumber
\end{align}
where $Q_{\textrm{VQ}}$ denotes the number of bits per complex component for vector quantization. In essence, block scaling normalizes the input signals for vector quantization.

There is a slight increase in signal processing latency associated with block scaling. For instance, in LTE $10$ MHz system, by choosing $N_{\textrm{BS}}=32$, the latency due to block scaling is $3.33$ $\mu$s with decimation value chosen as $5/8$.




\bibliographystyle{IEEEtran}


\end{document}